\newtheorem{theorem}{Proposition}[]
\def\BibTeX{{\rm B\kern-.05em{\sc i\kern-.025em b}\kern-.08em
    T\kern-.1667em\lower.7ex\hbox{E}\kern-.125emX}}
\newcommand{\R}[0]{\mathbb{R}}
\newcommand{\SComplex}[0]{\mathcal{S}}
\newcommand{\Hilbert}[0]{\mathcal{H}}
\newcommand{\filtration}[0]{\mathfrak{F}}
\begin{document}

\title{Higher-order topological kernels \\ via quantum computation
\thanks{MI and APD are partially supported by the Istituto Nazionale di Alta Matematica ``Francesco Severi''. FM is supported by the grant AdR4029/22 cofinanced by TAS Group and the University of Verona.}
}

\author{
\IEEEauthorblockN{Massimiliano Incudini\orcidlink{0000-0002-9389-5370}}
\IEEEauthorblockA{\textit{Department of Computer Science} \\
\textit{University of Verona}\\
Verona, Italy \\
massimiliano.incudini@univr.it}
\and
\IEEEauthorblockN{Francesco Martini\orcidlink{0000-0003-2651-140X}}
\IEEEauthorblockA{\textit{Department of Computer Science} \\
\textit{University of Verona}\\
Verona, Italy \\
francesco.martini@univr.it}
\and
\IEEEauthorblockN{Alessandra Di Pierro\orcidlink{0000-0003-4173-7941}}
\IEEEauthorblockA{\textit{Department of Computer Science} \\
\textit{University of Verona}\\
Verona, Italy \\
alessandra.dipierro@univr.it}
}

\maketitle

\begin{abstract}
Topological data analysis (TDA) has emerged as a powerful tool for extracting meaningful insights from complex data. TDA enhances the analysis of objects by embedding them into a simplicial complex and extracting useful global properties such as the Betti numbers, i.e. the number of multidimensional holes. which can be used to define kernel methods that are easily integrated with existing machine-learning algorithms. These kernel methods have found broad applications, as they rely on powerful mathematical frameworks which provide theoretical guarantees on their performance. However, the computation of higher-dimensional Betti numbers can be prohibitively expensive on classical hardware, while quantum algorithms can approximate them in polynomial time in the instance size. In this work, we propose a quantum approach to defining topological kernels, which is based on constructing Betti curves, i.e. topological fingerprint of filtrations with increasing order. We exhibit a working prototype of our approach implemented on a noiseless simulator and show its robustness by means of some empirical results suggesting that topological approaches may offer an advantage in quantum machine learning.
\end{abstract}

\begin{IEEEkeywords}
Quantum topological data analysis, quantum kernel, quantum machine learning, topological kernel, Betti curve, Betti number.
\end{IEEEkeywords}

\section{Introduction}

Quantum computing holds the promise of revolutionizing the field of machine learning due to the increase in computation power \cite{biamonte2017quantum}. Despite the numerous efforts made by researchers, no clear direction for achieving any advantage has been discovered yet. 
One of the initial techniques developed is the quantum kernel, which computes a kernel function via a quantum device \cite{havlivcek2019supervised}. Its success is attributed to the powerful mathematical framework of kernel methods, which is based on functional analysis and statistical learning and provides theoretical guarantee on their performance
\cite{canatar2021spectral}. The most basic type of quantum kernel essentially consists of a quantum embedding, a parametric unitary function that encodes classical data into rotational angles of quantum gates. The embedded data is processed using the fidelity test or the SWAP test to estimate the inner product. This kernel family has gained popularity in the quantum realm due to its immediate applicability to the current generation of devices \cite{preskill2018quantum}. This approach demonstrates superior performance compared to conventional kernels in certain physics-related practical applications \cite{wozniak2023quantum}. However, its use is hindered by unfavorable properties such as the flat distribution of exponentially small eigenvalues \cite{kubler2021inductive}, which prevents the learning of associated components, and the exponential concentration of coefficients, which makes kernel calls indistinguishable \cite{thanasilp2022exponential}.

An alternative method of constructing quantum kernels involves the use of fault-tolerant hardware. In \cite{liu2021rigorous}, a quantum kernel is proposed that uses the Shor algorithm and applies to problems involving single-feature data $x \in \mathbb{Z}_p$ that becomes linearly classifiable, i.e. trivial to classify, after being transformed via the mapping $x \mapsto \log_g(x)$ with $g \in \mathbb{Z}_p$. This approach allows for an exponential speedup of the quantum algorithm over the classical one, given widely accepted cryptographic assumptions. While this technique provides strong guarantees for this particular problem, it is highly artificial and not applicable to most real-world problems. Following a similar approach, in this paper we suggest the use of the Lloyd-Garnerone-Zanardi (LGZ) \cite{lloyd2016quantum} algorithm to estimate the topological feature of complex data for the construction of quantum kernels which offer both theoretical guarantees and practical applicability to real-world tasks. The LGZ is exponentially faster than the best classical method identified to date, and no attempts to invalidate this advantage via dequantization techniques have succeeded \cite{chia2022sampling}. While classical techniques are efficient for computing topological features of lower dimensions, the quantum framework seems to be indispensable for higher-dimensional ones. Fig. \ref{fig:tda-pipeline}(a) illustrates the scenario where quantum hardware can outperform its classical counterpart.

\begin{figure*}[tb]
    \centering
    \includegraphics[width=0.95\textwidth]{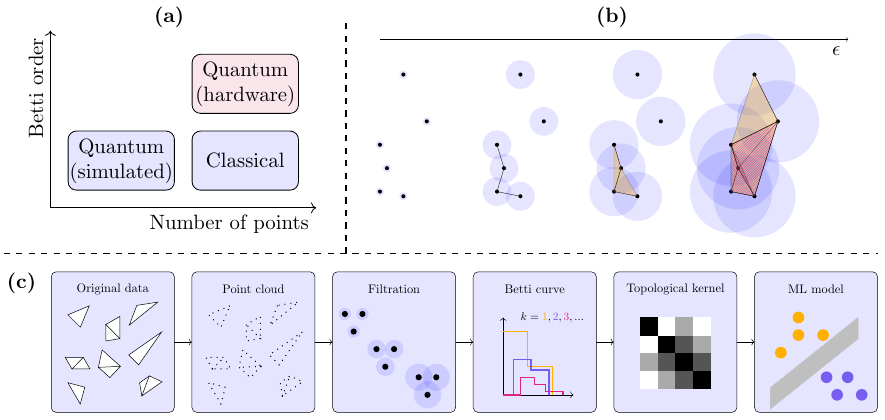}
    \caption{Key aspects of quantum topological data analysis and its application in the present work. \textbf{(a)} The feasibility of TDA for large point clouds on classical hardware is limited to lower-order Betti numbers, while efficient analysis of higher-order Betti numbers is expected on quantum hardware, assuming large-scale fault-tolerant quantum hardware is available. The quantum algorithm is not simulable on classical hardware for anything other than trivial examples with few points. \textbf{(b)} Construction of a Vietoris-Rips filtration. The features are first represented as a point cloud using an embedding function. Then, the skeleton of the simplicial complex is defined as a graph with vertices representing the points, and edges connecting pairs of points whose distance is less than or equal to $\epsilon$. Next, $k$-cliques are identified and added to the simplicial complex, possibly up to a maximum value of $k_\text{max}$. Finally, the filtration is defined by a sequence of simplicial complexes, $\SComplex_{1} \subset ... \subset \SComplex_{n}$ for $\epsilon_1 < ... < \epsilon_n$. \textbf{(c)} Pipeline integrating the topological kernel calculated with quantum computation into the machine learning model. The original data is embedded to a point cloud, used to construct the filtration, followed by quantum hardware computing Betti numbers for varying $\epsilon$ and order $k$ to create Betti curves. These curves are then embedded into a Hilbert space to define a positive definite kernel function that is integrated into the machine learning model.}
    \label{fig:tda-pipeline}
\end{figure*}

In the past decades, Topological Data Analysis (TDA) \cite{chazal2021introduction}, i.e. the extraction of useful topological features from data, has been widely studied.
In TDA, data is typically transformed into a point cloud, and a skeleton graph is generated by connecting vertices whose Euclidean distance is less than a threshold value $\epsilon$. The next step involves creating an abstract simplicial complex (ABS) from this skeleton graph, which captures the higher-order relationships between points (simplices). One common type of ABS is the Vietoris-Rips complex, which reveals all the $k$-cliques in the graph, potentially up to a certain order. By analyzing the ABS, we can determine the Betti number of order $k$, denoted $\beta_k$, which represents the number of $k$-dimensional holes in the structure. Additionally, a \emph{filtration} can be created by ordering a collection of ABS based on increasing values of $\epsilon$, resulting in a sequence of nested structures. \emph{Persistent Betti numbers} can then be calculated from the filtration, tracking the birth and death of each multidimensional hole, i.e., the minimum threshold for which the hole appears and the maximum threshold after which it disappears. Fig. \ref{fig:tda-pipeline}(b) provides an illustration of the filtration construction process.

Topological Data Analysis has the potential to enhance various aspects of machine learning, such as data exploration, feature engineering, and visual representation \cite{hensel2021survey}. In the context of feature engineering, TDA provides an alternative approach to traditional techniques that rely on distance measures in vector spaces. However, the integration process requires particular care. A topological fingerprint of the data (e.g. vector of Betti numbers) can be extracted and fed as additional features to most machine learning models, such as neural networks. Alternatively, the same aspects can be included in the definition of a kernel function, allowing the use of this powerful mathematical framework and integration with most distance-based machine-learning algorithms. Authors in \cite{reininghaus2015stable} have proposed a kernel technique based on the \emph{persistence diagram}, a two-dimensional representation of the topological aspects of a filtration, in which each point $(b_i, d_i)$ corresponds to the birth and death of the $i$-th multidimensional hole. Such representation can be embedded into a Hilbert space, whose inner product allows the construction of the similarity measure. Furthermore, it can be proven that this approach is stable with respect to the 1-Wasserstein distance, i.e. robust to small perturbations of the dataset \cite{carriere2017sliced}. An alternative definition of the kernel can be based on a more succinct representation of the topological features, the \emph{Betti curves}, introduced in \cite{umeda2017time}. These consider only the total number of $k$-dimensional holes for a given filtration, without the need to compute persistent Betti numbers. It is possible to obtain a Betti curve given the corresponding persistence diagram, although such a transformation is not injective. Finally, defining a distance (or similarity) measure between two Betti curves is straightforward because they are essentially bounded truncated piecewise linear functions.

In this work, we incorporate the data obtained from the LGZ algorithm into a kernel machine, show its efficacy, and explore the possibility of using our method in large-scale applications. 
To accomplish this, we introduce the concept of multidimensional Betti curves, which extend the Betti curves to explicitly consider various degrees of Betti numbers. This technique enables us to leverage the topological information provided by the LGZ algorithm without relying on the persistent Betti numbers. We demonstrate how multidimensional Betti curves can be used to define a kernel function and implement our approach using Python and the Qiskit platform. To evaluate the performance of our approach, we conduct experiments on the shape classification problem, a well-known benchmark. We compare the accuracy of a kernel machine with the higher-order topological kernel to conventional kernel functions, including the Gaussian kernel and polynomial kernels in various configurations. We calculate the topological kernels using either the classical exact procedure or by executing the LGZ algorithm on a noiseless quantum simulator to imply that such an algorithm requires fault-tolerant hardware. We also conduct further experiments to demonstrate the approach's robustness to errors introduced by the Hamiltonian simulation and finite sampling. In most cases, the quantum (approximated) technique closely resembles the classical (exact, but computationally expensive) method, even when using stochastic Hamiltonian simulation techniques such as qDrift.  Finally, we discuss potential applications based on time series analysis that could benefit from higher-degree Betti numbers and the advantages that quantum techniques may provide.

\section{Background}

This section provides a brief overview of the relevant background for this work.

\subsection{Kernel methods}

The application of kernel methods in both supervised and unsupervised machine learning is widespread and essential in these domains. These methods are described in detail in various sources, including \cite{steinwart2008support}. Kernel methods can be integrated with most similarity-based algorithms, including Ridge regression and Support Vector Machine, Principal Component Analysis, k-means, and DBSCAN. In this section, we present a concise summary of these ideas.

A \emph{kernel function} $\kappa : X \times X \to \R$ extends the concept of similarity for a non-empty set $X$. If $X$ is an inner product space, the inner product can be used as a similarity measure for a distance-based machine learning algorithm. However, if the inner product fails to capture the relationship between the data points or is unavailable, a kernel function can be used, which maps elements of $X$ into a higher-dimensional Hilbert space $\Hilbert$ using a feature map $\phi: X \to \Hilbert$,
\begin{equation}\label{eq:def_kernel}
    \kappa(x_1, x_2) = \langle \phi(x_1), \phi(x_2) \rangle_\Hilbert.
\end{equation}
Equivalently, $\kappa$ is a kernel if it satisfies positive definiteness and, by Mercer's theorem, this guarantees that there exists a feature map  $\phi$ and Hilbert space $\Hilbert$ satisfying Equation (\ref{eq:def_kernel}).

Considering the supervised learning problem
\begin{equation}
    f^* = \arg\min_{f \in \Hilbert} \sum_{i=1}^m \ell(f(x^{(i)}), y^{(i)}) + \lambda \lVert f \rVert_\Hilbert^2
\end{equation}
where $\{ (x^{(i)}, y^{(i)}) \}_{i=1}^m$ is the training set, $\ell$ convex loss function and $\lambda > 0$, the Representer Theorem guarantees that the learning problem is convex, $m$-dimensional and its solution can be expressed as $f^*(x) = \sum_{i=1}^m \alpha_i \kappa(x^{(i)}, x)$ with $\alpha_i$ parameters of the model. This contrasts with neural networks and most machine learning models, whose training is non-convex and provides no guarantee of finding the optimal solution.

\subsection{Topological data analysis}

We provide the necessary background in algebraic topology to comprehend the quantum algorithm. For a more detailed overview, we refer readers to \cite{chazal2021introduction,rieck2020topological}.

A \emph{$k$-simplex} $\sigma = \sum_{i=0}^k \alpha_i p_i$ is the convex hull of $k+1$ independent points $p_0, ..., p_k \in \R^d$, with $\alpha_i > 0$ and $\sum_{i=0}^k \alpha_i = 1$. The \emph{dimension} of $\sigma$ is $k$. A face of $\sigma$ is the convex hull of a subset of $\sigma$'s points. A \emph{simplicial complex} $S$ is a set of simplices such that $\sigma \in S, \tau$ is a face of $\sigma$ implies $\tau \in S$, and given $\sigma, \tau \in S$, their intersection $\sigma \cap \tau$ is either the empty set or a face of both.

The simplicial complex $S$ has a corresponding purely combinatorial entity known as the \emph{abstract simplicial complex} $\SComplex = (V, \Sigma)$ where $V = \{0, ..., k\}$ is the set of vertices and $\Sigma$ is a collection of subsets of $V$ such that $\sigma \in \Sigma$ and $\varnothing \neq \tau \subseteq \sigma$ implies $\tau \in \Sigma$. The elements of $\Sigma$ are called simplices. 

A \emph{$p$-chain} $c$ is a sum of $p$-simplices $\sigma_i \in \SComplex$, each multiplied by a coefficient $\alpha_i$. In this context, the coefficients are taken from the ring $\mathbb{Z}_2$, and chains can be treated as sets. The collection of $p$-chains, with addition modulo two, forms the chain group $C_p(\SComplex)$ or simply $C_p$. For a $p$-simplex $\sigma$, the \emph{boundary operator} $\partial_p: C_p \to C_{p-1}$ is given by
\begin{equation}
\partial_p \sigma = \sum_{j=0}^k (-1)^j \{ v_0, ..., v_k \} \setminus \{ v_j \}.
\end{equation}
It holds that $\partial_{p-1} \circ \partial_p = 0$. A $p$-chain $c$ is called a $p$-cycle if $\partial c = 0$. The collection of all $p$-cycles forms a group under addition modulo two, known as the \emph{cycle group} $Z_p$. 
Applying the boundary operator $\partial_{p+1}$ to all $p+1$-chains in $C_{p+1}$ produces the \emph{boundary group} $B_p$. The $p$-th  \emph{homology group} is the quotient group $Z_p / B_p$, which is also a vector space of dimension $\beta_p = \mathrm{dim} H_p$. 
The number $\beta_p$ is called the \emph{$p$-th Betti number}. In addition, there is a relationship between $H_p$ and the Hodge Laplacian $\Delta_p = \partial_{p}^\dagger \partial_{p} + \partial_{p+1} \partial_{p+1}^\dagger$ provided by Hodge theory, $H_p = \mathrm{ker} \Delta_k$.

A filtration of a simplicial complex $\SComplex$, $\filtration(\SComplex)$ or simply $\filtration$, is a nested sequence of its subcomplex $\filtration: \varnothing = \SComplex_0 \subseteq ... \subseteq \SComplex_n = \SComplex$. Every filtration gives rise to a sequence of homomorphisms $h^{i,j}_p : H_p(\SComplex_i) \to H_p(\SComplex_j)$. The $p$-th persistent homology group is $H_p^{i,j} = \mathrm{im} \, h^{i,j}_p$ and $\mathrm{dim} H_p^{i,j}$ is the \emph{$p$-th persistent Betti number}.

Given a metric space with distance $d$ and $r > 0$, the abstract simplicial complex $\SComplex$ is a \emph{Vietoris-Rips complex} when $\sigma \in \SComplex$ if and only if $d(p, q) \le 2r$ for all vertices $p, q \in \sigma$. In the case of Vietoris-Rips complexes, we can represent $\SComplex$ using a graph $G = (V, E)$, with vertices representing $0$-simplices, edges representing $1$-simplices, and $(k+1)$-cliques representing $k$-simplices.

\section{Quantum computation of Betti numbers}\label{sec:quantum}

\begin{figure*}[t]
    \centering
    \includegraphics[width=0.90\textwidth]{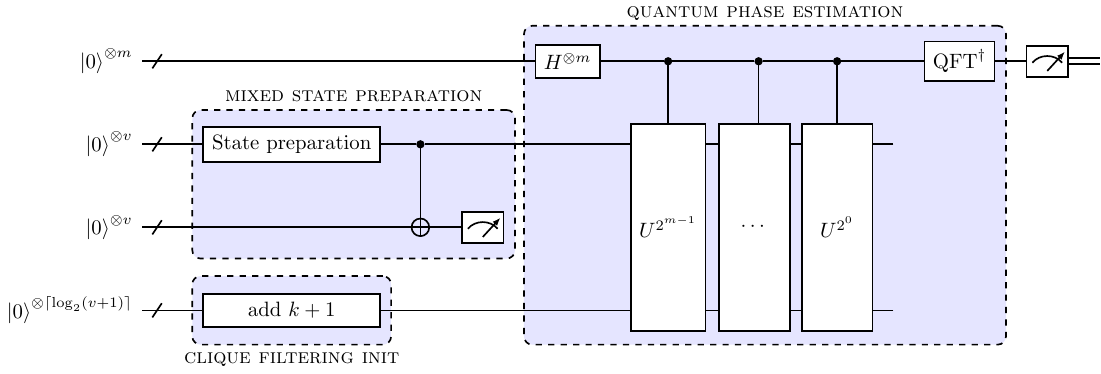}
    \caption{Quantum circuit implementation of the LGZ algorithm.}
    \label{fig:lgz-circuit}
\end{figure*}

The authors in \cite{lloyd2016quantum} proposed a quantum algorithm for approximating $\beta_k$ of a simplicial complex $\SComplex$. In this section, we briefly discuss this algorithm and explore both its potential benefits and limitations.

\subsection{LGZ algorithm}

The LGZ algorithm takes in a Vietoris-Rips complex $\SComplex$ (or equivalently its graph $G$) and a non-negative integer $k$, and returns an approximation of the Betti number $\beta_k$. The number of vertices of $G=(V, E)$ is denoted with $|V| = n$. We adopt the notation from \cite{gyurik2022towards} by using $Cl_{k}(G) \subset \{ 0, 1 \}^n$ to denote the set of $k+1$-clique in $G$ of binary strings having Hamming weight $k+1$ and whose onset corresponds to the vertices of the clique. The algorithm works as follows. 

Firstly, we prepare the state
\begin{equation}
\rho_k^G = \frac{1}{|Cl_{k}(G)|} \sum_{j \in Cl_{k}(G)} \ketbra{j}{j}.
\end{equation}
This is achieved by encoding the uniform superposition of $k+1$-cliques in $G$ using an $n$-qubits register, which can be accomplished either by utilizing Grover's algorithm as outlined in \cite{gunn2019review} or by employing a state preparation algorithm \cite{ameneyro2022quantum}. While Grover's algorithm can be useful for preparing states in general, the overhead it introduces may make the use of a state preparation routine more convenient for proof-of-concept implementations with small-scale systems. Next, a CNOT gate is applied with the control qubit on each qubit of the state to a separate ancilla to obtain the mixed state, and the $v$ ancillae are then measured and their results discarded.

We now perform quantum phase estimation over the eigenvector $\rho_k^G$ and unitary operator  $\exp(-i \Delta)$, with $\Delta$ the Hodge-Laplacian operator. Note that, instead of simulating $\Delta$, we can instead use the Dirac operator $B$,
\begin{equation}
    B =
    \scalebox{0.7}{$\begin{bmatrix}
    0 & \partial_{1}^\SComplex & & & & \\
    (\partial_{1}^\SComplex)^\dagger & 0 & \partial_{2}^\SComplex & & & \\
    & (\partial_{2}^\SComplex)^\dagger & 0 & & & \\
    & &  & \ddots & & \\
    & & & & 0 & \partial_{n-1}^\SComplex \\
    & & & & (\partial_{n-1}^\SComplex)^\dagger & 0
    \end{bmatrix}$},
\end{equation}
which is the square root of the Laplacian operator and is $n$-sparse thus easier to implement. As we are interested in $\mathrm{ker} \Delta_k^G$ rather than $\mathrm{ker} \Delta$, we modify the routine by evolving the operator $I \otimes B$ on the state $\ketbra{k+1}{k+1} \otimes \rho_k^\SComplex$ using $\lceil \log_2(n+1) \rceil$ more qubits. This ensures that the estimated eigenvalue is nonzero for $\ket{j} \not \in \mathrm{ker} \Delta_k^G$ \cite{gunn2019review}. Furthermore, to prevent the estimation of eigenvalues that are multiples of $2\pi$, we can rescale $B$ by the factor $\lambda_\text{max}^{-1}$, where $\lambda_\text{max}$ is the highest eigenvalue of $B$ which is bounded by $O(n)$ due to the Gershgorin circle theorem. The operator $\exp(-i B)$ can be implemented using a Hamiltonian simulation algorithm, such as Trotter decomposition \cite{childs2021theory} or stochastic techniques like qDrift \cite{campbell2019random}. These approaches provide varying levels of resource requirements and error rates, with stochastic techniques being less resource-intensive but producing a larger error. The use of importance sampling in the stochastic selection of the operator terms \cite{kiss2023importance} can lead to further resource reduction. 

Finally, we perform the estimation process $M \in O(\epsilon^{-2})$ times to achieve a precision of $\epsilon$. We use the obtained phases $\theta^{(1)}, ..., \theta^{(M)}$ to compute
\begin{equation}
\beta_k \approx \tilde{\beta}_k = \frac{|\{ \theta^{(i)} \mid \theta^{(i)} = 0 \}|}{M} \times |Cl_{k}(G)|.
\end{equation}

To distinguish the phase $0$ from the smallest eigenvalue we require a precision of $\kappa = \lambda_\text{max} / \lambda_\text{min}$.
However, no bound is known for $\lambda_\text{min}$. Nonetheless, if we fix a precision that cannot distinguish $\lambda_\text{min}$ from $0$, LGZ will still provide useful topological information \cite{gyurik2022towards}, even though such information does not correspond to the normalized Betti numbers. The circuit is shown in Fig. \ref{fig:lgz-circuit}.

\subsection{Limitations of the LGZ algorithm}

While the LGZ algorithm has the potential to achieve a superpolynomial speedup, it is crucial to address its limitations.
Clearly, this algorithm requires fault-tolerant quantum computers, although a cheaper variation that can be executed on NISQ devices exists \cite{akhalwaya2022exponential}.

According to \cite{schmidhuber2022complexity}, we can estimate the normalized Betti number $\beta_k / |Cl_k(G)|$ up to an additive error $\varepsilon$ in time 
\begin{equation}
    O\left(\frac{n^3 \kappa + n k^2 \zeta_k^{-1/2}}{\varepsilon^2}\right).
\end{equation}
Firstly, the state preparation is performed via Grover's algorithm using the oracle
\begin{equation}
    f_k(j) 
    = \begin{cases} 1, & j \in Cl_k(G) \\ 0, & j \not\in Cl_k(G)\end{cases},
\end{equation}
which can be implemented in $O(k^2)$, and requiring $O(\zeta^{-1/2}_k)$ oracle calls, where $\zeta_k = |Cl_k(G)|/\binom{n}{k+1}$. The cost of applying CNOTs for constructing the mixed state is $O(n)$. Secondly, quantum phase estimation of an $n$-sparse operator requires $O(n^3)$ gates and a precision of $O(\kappa)$, where $\kappa = \lambda_\text{max}/\lambda_\text{min}$. Thirdly, the cost of sampling is $\varepsilon^{-2}$.

Imposing a multiplicative error of $\delta$ on the \emph{actual} Betti numbers, obtained fixing $\varepsilon = \delta \beta_k/|Cl_k(G)|$, results in a runtime of
\begin{equation}
    O\left(\frac{|Cl_k(G)|^2}{\beta_k^2}\frac{n^3 \kappa + n k^2 \zeta_k^{-1/2}}{\delta^2}\right).
\end{equation}

For the LGZ algorithm, a polynomial runtime is maintained only when $\zeta^{-1} \in O(\mathrm{poly}(n))$, which is the case for a graph that is $k$-clique dense. Additionally, it is necessary for $\kappa$ to be polynomial. Although $\lambda_\text{max}$ can be bounded by $O(n)$, there is currently no known lower bound for $\lambda_\text{min}$, meaning it could potentially be exponentially small. It is worth noting that \cite{schmidhuber2022complexity} has demonstrated that both exact and approximate Betti number calculations are NP-hard, even for clique-dense graphs, which implies that it is necessary to investigate alternative definitions of simplicial complexes that are not based on clique complexes.

LGZ is not capable of estimating persistent Betti numbers. Nonetheless, there are other quantum algorithms available for this task, such as the one proposed in \cite{hayakawa2022quantum}. Authors in \cite{mcardle2022streamlined} introduced an alternative method that can achieve a speedup of up to quintic compared to deterministic classical TDA techniques. 

\section{Higher-order topological kernels}

The topological kernel method embeds data into a filtration, extracts topological features, uses to define a distance function, which can immediately be transformed into a kernel. The term \emph{higher-order} denotes that the kernel captures information concerning Betti numbers of orders greater than 1. We denote by $\beta_k(\SComplex)$ the $k$-th Betti number of a given simplicial complex $\SComplex$. 

The Betti curve, also known as Betti sequence, can be used to represent topological information using non-persistent Betti numbers \cite{umeda2017time}. Given a filtration $\filtration: \SComplex_0 \subset ... \subset \SComplex_q$, with $\epsilon_0, ..., \epsilon_q$ the corresponding thresholds, and an integer $k \ge 0$, the Betti curve $\beta_k^\filtration: \R \to \mathbb{N}$ is defined by:
\begin{IEEEeqnarray}{rl}
    \beta_k^\filtration & (\epsilon) = \beta_k(\SComplex_j), \epsilon \in [\epsilon_j, \epsilon_{j+1}).
\end{IEEEeqnarray}
Note that the Betti curve provides less information compared to other topological representations, such as persistence diagrams, which are based on persistent Betti numbers. However, one advantage of the Betti curve representation is that we can exploit LGZ algorithm as a procedure to extract the Betti numbers. Furthermore, authors in \cite{rieck2020topological} have shown that the 1-norm of Betti curves is stable against small perturbations of the dataset. 

The concept of Betti curves can be extended to include multiple orders of Betti numbers $k$, resulting in the \emph{multivariate Betti curve} $\beta_{\le k}^\filtration: \mathbb{N} \times \R \to \mathbb{N}$,
\begin{IEEEeqnarray}{rl}
    \beta_{\le k}^\filtration & (k', \epsilon) = \begin{cases}
    \beta_{k'}(\SComplex_j), & k' \le k \text{ and } \epsilon \in [\epsilon_j, \epsilon_{j+1}) \\
    0, & k' > k \text{ or } \epsilon \not\in [\epsilon_0, \epsilon_q]
    \end{cases}
\end{IEEEeqnarray}
We can represent such information by means of a matrix of real elements $B_{\le k}^\filtration \in \R^{(k+1) \times (q+1)}$,
\begin{equation}\label{eq:bcmatrix}
    (B_{\le k}^\filtration)_{i,j} = \beta_{i}(\SComplex_j); \quad 0 \le i \le k, 0 \le j \le q
\end{equation}

We can establish an upper bound on the error of $\tilde{B}_{\le k}^\filtration$, which is the approximation of $B_{\le k}^\filtration$ obtained by using LGZ to estimate the Betti numbers. To derive this bound, we assume that we are operating in the optimal regime, i.e. $k'$-clique dense graph for all $k' \le k$.
\begin{theorem}
Consider the Vietoris-Rips filtration $\filtration: \SComplex_0 \subset ... \subset \SComplex_q$, $G_0, ..., G_q$ the corresponding graphs, $\epsilon_0, ..., \epsilon_q$ the corresponding thresholds, and an integer $k \ge 0$. Use LGZ to estimate (non-normalized) Betti numbers up to a multiplicative error $\delta > 0$. Then, we can calculate $\tilde{B}_{\le k}^\filtration$, the approximation of $B_{\le k}^\filtration$, in time
\begin{equation}
    O\left(qk \times \frac{|Cl_k(G)|^2}{\beta_k^2}\frac{n^3 \kappa + n k^2 \zeta_k^{-1/2}}{\delta^2}\right).
\end{equation}
with relative Frobenious norm upper bounded by $\delta$,
\begin{equation}\label{eq:relfrobnorm}
    \frac{\lVert 
    \tilde{B}_{\le k}^\filtration - B_{\le k}^\filtration \rVert_F
    }{\lVert  B_{\le k}^\filtration \rVert_F} \le \delta.
\end{equation}
\end{theorem}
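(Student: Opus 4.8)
The plan is to split the statement into its two independent claims — the running time and the relative Frobenius error — and reduce each to facts already recorded in Section~\ref{sec:quantum} about estimating a \emph{single} Betti number.

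\textbf{Running time.} The matrix $B_{\le k}^\filtration \in \R^{(k+1)\times(q+1)}$ has $(k+1)(q+1) = O(qk)$ entries, and each entry $(B_{\le k}^\filtration)_{i,j} = \beta_i(\SComplex_j)$ is a single non-normalized Betti number of order $i \le k$ for one complex $\SComplex_j$ of the filtration. I would first argue that the per-instance cost formula already derived, namely $O\!\left(\frac{|Cl_k(G)|^2}{\beta_k^2}\frac{n^3\kappa + nk^2\zeta_k^{-1/2}}{\delta^2}\right)$, dominates the cost of computing every entry: lowering the order from $k$ to $i<k$ only decreases the clique-enumeration and Hamiltonian-simulation contributions, and because the filtration is nested we have $G_j \subseteq G = G_q$, so interpreting $n$, $|Cl_k(G)|$, $\zeta_k$, $\kappa$, $\beta_k$ as the worst-case quantities over $j=0,\dots,q$ (which is what the statement tacitly intends) yields a uniform per-entry bound. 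Multiplying this bound by the number of entries $O(qk)$ gives the claimed runtime; the cost of assembling the filtration graphs and of classical postprocessing is lower order and absorbed.

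\textbf{Relative Frobenius error.} This is the short part and follows directly from the definition of multiplicative error and of the Frobenius norm. For every $i,j$ we have $\lvert \tilde\beta_i(\SComplex_j) - \beta_i(\SComplex_j)\rvert \le \delta\,\beta_i(\SComplex_j)$, hence
\begin{equation}
\big\lVert \tilde B_{\le k}^\filtration - B_{\le k}^\filtration \big\rVert_F^2
= \sum_{i=0}^{k}\sum_{j=0}^{q} \big(\tilde\beta_i(\SComplex_j) - \beta_i(\SComplex_j)\big)^2
\le \delta^2 \sum_{i=0}^{k}\sum_{j=0}^{q} \beta_i(\SComplex_j)^2
= \delta^2 \big\lVert B_{\le k}^\filtration \big\rVert_F^2 ,
\end{equation}
and taking square roots gives Equation~(\ref{eq:relfrobnorm}); the denominator is nonzero because at least $\beta_0(\SComplex_q) \ge 1$.

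\textbf{Main obstacle.} There is no genuine difficulty here — the work is bookkeeping — but the subtlest point is making the runtime statement precise. Specifically, I would take care to state that $\kappa$, $\zeta_k$, $|Cl_k(G)|$ and $\beta_k$ in the bound are to be read as worst-case quantities across the filtration under the assumed $i$-clique-dense regime for all $i \le k$, and to handle complexes with a vanishing Betti number, for which both the multiplicative-error guarantee and the induced precision $\varepsilon = \delta\beta_k/|Cl_k(G)|$ degenerate; such entries can be certified (or bounded) separately and, since a zero Betti number contributes nothing to the error sum above, they do not affect the relative-norm conclusion. Once these conventions are fixed, both parts follow from the single-instance analysis.
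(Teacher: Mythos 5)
Your proposal is correct and follows essentially the same route as the paper: the runtime is the single-instance LGZ cost multiplied by the $O(qk)$ matrix entries, and the Frobenius bound follows from the entrywise multiplicative-error guarantee. Your entrywise summation is in fact a slightly more careful rendering of the paper's one-line inequality $\lVert \tilde{B}_{\le k}^\filtration - B_{\le k}^\filtration \rVert_F \le \lVert (1+\delta) B_{\le k}^\filtration - B_{\le k}^\filtration \rVert_F \le \delta \lVert B_{\le k}^\filtration \rVert_F$, and your remarks about reading $\kappa$, $\zeta_k$, $|Cl_k(G)|$, $\beta_k$ as worst-case quantities over the filtration address a point the paper leaves implicit.
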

\begin{proof}
The runtime is obtained by separately estimating all the elements of the matrix, which adds an overhead proportional to $qk$. The upper bound on the relative Frobenius norm comes from the properties of the norm and elementary arithmetic rules,  
\begin{equation*}
    \lVert 
    \tilde{B}_{\le k}^\filtration - B_{\le k}^\filtration \rVert_F \le \lVert 
    (1+\delta) B_{\le k}^\filtration - B_{\le k}^\filtration \rVert_F \le |\delta| \cdot \lVert B_{\le k}^\filtration \rVert_F.
\end{equation*}
\end{proof}

If we consider a dataset of $m$ elements, where each element corresponds to a cloud of $n$ points, we can place an upper bound on the number of distinct thresholds $q$ as $m(n-1)(n-2)/2$. This is due to the fact that each point can have at most $(n-1)(n-2)/2$ different distances, which can be determined by examining the adjacency matrix of the graph. Since the matrix is symmetric and the diagonal contains all zeros, we only need the values in the upper triangular part without the principal diagonal. In practical scenarios, the actual number of thresholds can be significantly smaller than the upper bound. 

We can define a distance between multidimensional Betti curves, given $B_{\le k}^{\filtration,(1)}, B_{\le k}^{\filtration,(2)}$ defined on the \emph{same} sequence of thresholds $\epsilon_0, ..., \epsilon_q$, 
\begin{IEEEeqnarray}{l}\label{eq:distance}
    \nonumber \mathrm{d}(B_{\le k}^{\filtration,(1)}, B_{\le k}^{\filtration,(2)}) = \\ \left(\sum_{k'=0}^k \sum_{j=0}^{q-1} (\epsilon_{j+1}-\epsilon_j) \Big((B_{\le k}^{\filtration,(1)})_{k', j} - (B_{\le k}^{\filtration,(2)})_{k', j} \Big)^p \right)^{\frac{1}{p}}
\end{IEEEeqnarray}
It should be noted that Equation (\ref{eq:distance}) represents an instance of the weighted Minkowski distance; for $p=2$, it becomes a weighted Euclidean distance.

The concepts of similarity (kernel) and distance measures are closely related. One example is the use of a kernel function $\kappa$ to induce a distance measure between two objects, 
\begin{equation}
    \mathrm{d}(a, b) = \kappa(a, a) + \kappa(b, b) - \kappa(a, b). 
\end{equation}
Conversely, it is also possible to define a kernel function given a distance measure $d$. One common approach is to use the Gaussian kernel, 
\begin{equation}\label{eq:kernel}
    \kappa(B_{\le k}^{\filtration,(1)}, B_{\le k}^{\filtration,(2)}) = \exp(- \gamma \, \mathrm{d}(B_{\le k}^{\filtration,(1)}, B_{\le k}^{\filtration,(2)}))
\end{equation}
where $\gamma > 0$. Here, the distance function $\mathrm{d}$ corresponds to the one defined in Equation (\ref{eq:distance}).

\begin{theorem}[]
The function in (\ref{eq:kernel}) is a Mercer kernel. 
\end{theorem}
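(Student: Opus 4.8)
The plan is to peel off the weights and the matrix bookkeeping, reduce the claim to the classical positive definiteness of the exponential of an $\ell_p$-distance on a Euclidean space, and close with Mercer's theorem as stated in the excerpt. First I would vectorise: given two multivariate Betti curves $B^{(1)},B^{(2)}$ defined on the common grid $\epsilon_0<\dots<\epsilon_q$, set $\phi(B)\in\R^{(k+1)q}$ to be the vector whose $(k',j)$ entry is $(\epsilon_{j+1}-\epsilon_j)^{1/p}\,(B)_{k',j}$. Then the distance in (\ref{eq:distance}) is exactly $\mathrm{d}(B^{(1)},B^{(2)})=\lVert\phi(B^{(1)})-\phi(B^{(2)})\rVert_p$, the standard $\ell_p$ norm on $\R^{N}$ with $N=(k+1)q$. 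Since precomposing a positive (semi)definite kernel with an arbitrary map preserves positive definiteness, it suffices to show that $(u,v)\mapsto\exp(-\gamma\lVert u-v\rVert_p)$ is positive definite on $\R^{N}$ for every $\gamma>0$.

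For this I would invoke Schoenberg's theorem: if $\psi$ is symmetric and vanishes on the diagonal, then $\exp(-\gamma\psi)$ is positive definite for all $\gamma>0$ iff $\psi$ is conditionally negative definite (CND), i.e. $\sum_{i,j}c_ic_j\,\psi(u_i,u_j)\le 0$ whenever $\sum_i c_i=0$. So the task reduces to showing $\psi(u,v)=\lVert u-v\rVert_p$ is CND, which I would do in two steps. (i) $\lVert u-v\rVert_p^{\,p}=\sum_{t=1}^N|u_t-v_t|^p$ is CND for $0<p\le 2$: on a single coordinate $|s-t|^2=s^2+t^2-2st$, where $s^2+t^2$ is of the form $g(s)+g(t)$ and hence CND (the associated quadratic form vanishes under $\sum c_i=0$), while $-2st$ is minus a positive definite kernel and hence CND; then $|s-t|^p=(|s-t|^2)^{p/2}$ is CND because a nonnegative CND kernel raised to a power in $(0,1]$ stays CND (Bernstein/subordination, using $x^\alpha\propto\int_0^\infty(1-e^{-\lambda x})\lambda^{-\alpha-1}\,d\lambda$ together with the fact that $1-e^{-\lambda\psi}$ is CND); finally positively weighted sums of CND kernels are CND. (ii) Applying the same power rule with exponent $1/p\le 1$ gives that $\lVert u-v\rVert_p=(\lVert u-v\rVert_p^{\,p})^{1/p}$ is CND. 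Combining, $\psi$ is CND, Schoenberg gives positive definiteness of $\kappa$, and continuity and symmetry of $\kappa$ are immediate, so $\kappa$ is a Mercer kernel. For the $p=2$ case emphasised after (\ref{eq:distance}) step (i) collapses to the single observation that $\lVert u-v\rVert_2^2$ is CND by the expansion above, and only one subordination step is needed.

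The main obstacle worth flagging is the admissible range of $p$: the two power-rule applications require $0<p\le 2$ (for step (i)) and $p\ge 1$ (for step (ii)), so the argument is valid precisely for $1\le p\le 2$. This is not a gap one can patch, because for $p>2$ the $\ell_p$ distance fails to be CND in dimension $\ge 3$ (equivalently, $\ell_p^N$ does not embed in $L^1$), and $\exp(-\gamma\lVert u-v\rVert_p)$ is then not positive definite in general; hence the proposition should be understood as applying to the practically relevant cases $p\in\{1,2\}$ (more generally $p\in[1,2]$). A secondary, cosmetic point is that the expression in (\ref{eq:distance}) must be read with $|\cdot|^p$ in place of $(\cdot)^p$ for $\mathrm{d}$ to be a genuine metric when $p$ is not an even integer; this does not affect the argument, which only uses the identification $\mathrm{d}=\lVert\phi(\cdot)-\phi(\cdot)\rVert_p$.
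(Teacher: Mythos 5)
Your proposal is correct, but it takes a much more substantial route than the paper, whose entire proof is the one-line remark that the claim ``follows from the definition of Gaussian kernel'' with a citation to Steinwart--Christmann. You instead vectorise the weighted Betti-curve matrices so that $\mathrm{d}$ becomes an $\ell_p$ distance after an explicit feature map, and then run the full Schoenberg machinery: $\exp(-\gamma\psi)$ is positive definite for all $\gamma>0$ iff $\psi$ is conditionally negative definite, and $\lVert u-v\rVert_p$ is CND for $1\le p\le 2$ via the two subordination steps you describe. This buys something real that the paper's citation does not: the function in (\ref{eq:kernel}) is $\exp(-\gamma\,\mathrm{d})$, not $\exp(-\gamma\,\mathrm{d}^2)$, and $\mathrm{d}$ is a weighted $\ell_p$ distance rather than the Euclidean one, so the claim is not literally an instance of ``the Gaussian kernel is positive definite'' --- for $p=2$ it is closer to a Laplacian kernel precomposed with a linear embedding, and for general $p$ positive definiteness genuinely requires the CND argument. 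Your observation that the result holds precisely for $p\in[1,2]$ and fails for $p>2$ (since $\ell_p^N$ does not embed in $L^1$ for $N\ge 3$) is a correct and nontrivial restriction that the paper's proof silently glosses over; the same goes for your remark that $(\cdot)^p$ in (\ref{eq:distance}) should be read as $|\cdot|^p$. In short, your argument is a strict improvement in rigour and delineates the actual range of validity of the proposition, at the cost of invoking Schoenberg's theorem rather than a textbook citation.
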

\begin{proof}
Follows from the definition of Gaussian kernel \cite{steinwart2008support}. 
\end{proof}

The kernel defined by (\ref{eq:kernel}) is tailored for use with the LGZ algorithm, which does not require calculating persistent Betti numbers, in contrast to the ones previously proposed in the classical machine learning literature \cite{chazal2021introduction,rieck2020topological}. Nevertheless, quantum algorithms capable of estimating persistent Betti numbers have the ability to generate topological kernels based on alternative representations, such as persistence diagrams (as discussed in Sec. \ref{sec:quantum}).

\section{Experimental assessment}

\begin{figure*}[tbp]
    \centering
    \includegraphics[height=0.90\textheight]{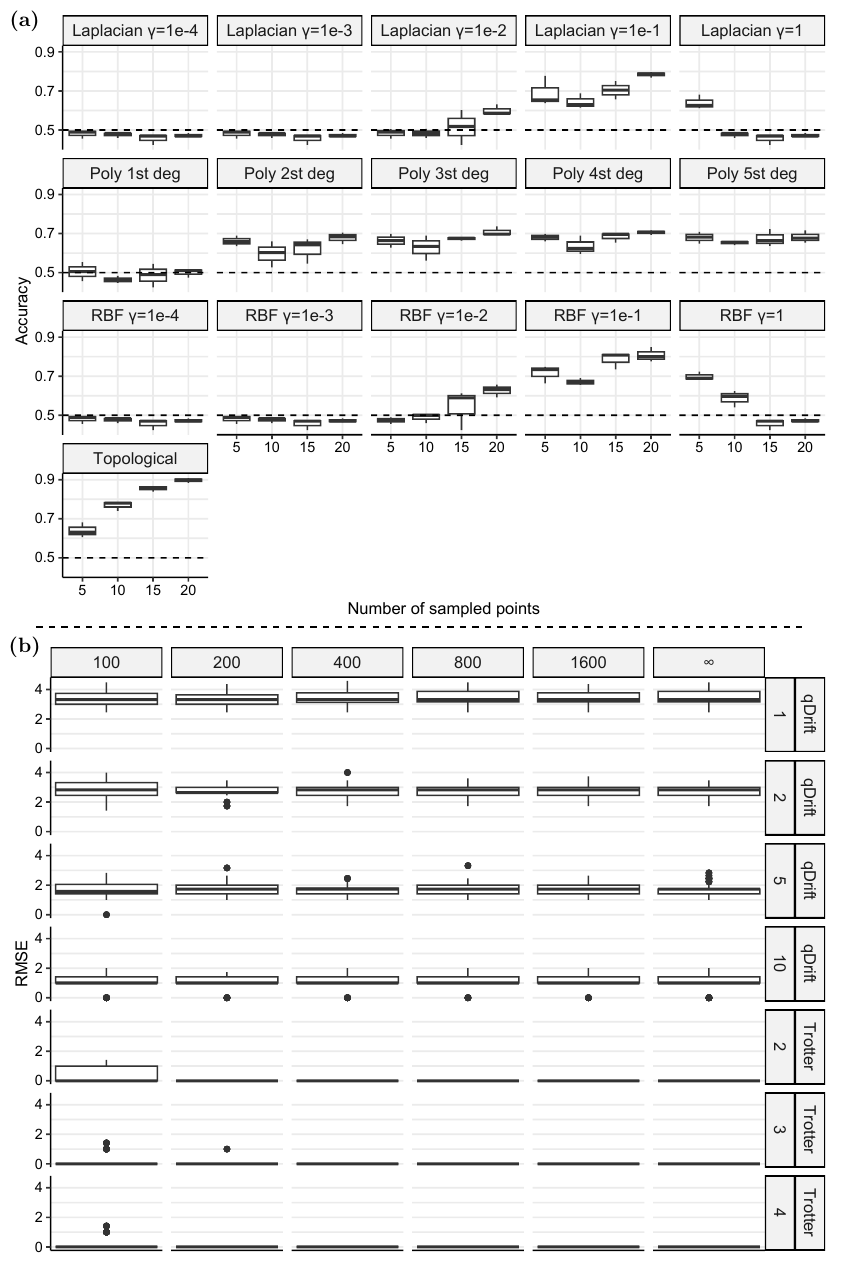}
    \caption{Experimental results. \textbf{(a)} shows the accuracy of kernel machines using various kernels and configurations, plotted against the number of sampled points on the $x$-axis and accuracy on the $y$-axis. \textbf{(b)} shows the root mean square error (RMSE) between the topological kernel presented in (\ref{eq:kernel}), computed through the classical exact procedure, and the LGZ-calculated version with an increasing number of shots (from left to right) and different Hamiltonian simulation techniques, while also varying the number of circuit repetitions (labeled from top to bottom on the rightmost side).}
    \label{fig:quantum_plots}
\end{figure*}

We show a working prototype of our approach and test it over the shape classification problem, a small-scale synthetic benchmark feasible to solve on a quantum simulator. To generate the dataset, we sampled points from the perimeter of various shapes. We then created Vietoris-Rips complexes for different values of $\epsilon$ and extracted the Betti numbers for various values of $k$ to construct the kernel as defined in (\ref{eq:kernel}). After applying this kernel to a Support Vector Machine (SVM), we measured its accuracy and compared it to the accuracy of an SVM that utilizes a conventional classical kernel. The machine learning pipeline used in this experiment is pictured in Fig. \ref{fig:tda-pipeline}(c). 

Furthermore, we investigate the robustness of our approach by examining how much the topological kernel computed using the quantum algorithm deviates from the exact result computed on classical hardware. To test this, we vary the Hamiltonian simulation techniques (Trotter and qDrift), the number of repetitions, and the number of shots used for sampling the results.

\subsection{Setup}

The dataset used for shape classification is artificially created procedurally. The objective of this classification problem is to differentiate between a triangle and a sliced quadrangle, given a set of uniform sample points on their perimeters. It is worth noting that these two shapes are not topologically equivalent and therefore possess distinct topological characteristics. Alternatively, we could differentiate digits zero and eight, topologically equivalent to triangles and sliced quadrangles, for some datasets such as MNIST.

To create each triangle, we described it in $\R^2$ using the points $\{(0,0), (0,1), (1,1) \}$. We then apply an affine transformation that rotates the shape by a randomly generated angle between $0$ and $2\pi$ radians and introduces a skew by randomly generating shear angles along the $x$ and $y$ axes with degrees between $0$ and $\pi/4$. Similarly, we construct each square using the points $\{(0,0), (0,1), (1,1), (1,0) \}$ and define its perimeter as the external border joined with the segment $(0,0) - (1,1)$. We then apply the same transformations as for the triangles to generate various instances of the square. Once generated the dataset of shapes, we sample an increasing number of points from the perimeters of these.

The dataset created has 100 items and is balanced. It has been randomly split into training and testing sets so that each subset is balanced too. We have sampled clouds of points from this dataset, with the number of points ranging from 5 to 20 for topological kernels calculated using classical methods. Furthermore, we have defined a smaller dataset of $m=20$ items and sampled 5 points from each item.

Firstly, we have compared the performance of the topological kernel (\ref{eq:kernel}) on the larger dataset having $\gamma = 1.0, p = 2$ with the Gaussian kernel (hyperparameter $\gamma = 1e{-4}, 1e{-3}, 1e{-2}, 1e{-1}, 1$), Laplacian kernel (hyperparameter $\gamma = 1e{-4}, 1e{-3}, 1e{-2}, 1e{-1}, 1$), and the polynomial kernel (degree $d = 1, 2, 3, 4, 5$). The metric chosen for the comparison is the accuracy with respect to the testing set. In this case, the topological kernel has been calculated classically and the result is exact. 

Secondly, we have used the smaller dataset to tackle the shape classification problem through the use of topological kernels, which are computed using both classical exact methods and a quantum algorithm on a classical simulator. Then, we have compared the difference between the two versions by means of root mean square error (RMSE), which is defined as $\mathrm{RMSE}(K, \tilde{K}) = \sqrt{(\sum_{i,j=1}^m (K)_{i,j} - (\tilde{K})_{i,j})/m^2}$. To implement LGZ, we have used Hamiltonian simulation techniques such as Trotter (with a Trotter number spanning $2, 3, 4$) and qDrift (with a repetition number of $1, 2, 5, 10$), along with varying numbers of shots. This allows us to analyze the impact of these factors on the solution.

\subsection{Results}

In Figure \ref{fig:quantum_plots}(a), we can observe the accuracy of kernel machines utilizing various kernel techniques. Notably, the topological kernel yields the best performance compared to the conventional kernels. This can be explained by the inductive bias of topological data analysis. TDA assumes that the data has an underlying topological structure that can be analyzed and is informative for understanding its properties. This assumption is certainly true for the shape classification problem.

More importantly, the topological technique is the only one that exhibits improvement in performance with an increase in sampled points. This behavior cannot be seen for the other kernels, particularly the RBF kernel which shows the second-best performance. The ability of our kernel to remain robust in relation to the set of sampled points of our figures is especially significant and serves as a strong motivation for utilizing the topological kernel over the other approaches. 

In Figure \ref{fig:quantum_plots}(b), we provide empirical evidence of the robustness of our approach regarding the choice of Hamiltonian simulation technique and the number of shots. It is noteworthy that the performance of the topological kernel created using LGZ closely resembles the exact classical calculations. The qDrift technique introduces stochasticity in the simulation, thus requiring a greater number of repetitions to attain satisfactory performance. This is in contrast to the Trotter technique, which is more resource-intensive, yet yields the best performance (zero RMSE) with only two repetitions. The number of shots needed is relatively unimportant, as 200 shots are sufficient to accurately estimate the coefficients.

\section{Conclusion and future work}

We have introduced a technique for generating a topological kernel using the LGZ quantum algorithm. We have established an upper bound on the error that may arise from the use of this algorithm and demonstrated that the resulting kernel satisfies the criteria of a Mercer kernel. Finally, we have provided a working prototype of our method and tested it on a synthetic benchmark for shape classification. Our approach is feasible when using Trotter and qDrift as Hamiltonian simulation techniques (the latter only with a large enough number of repetitions of the circuit) and with a modest number of shots.

The work we have presented has the potential to be extended in several ways. Firstly, the literature lacks sufficient evidence regarding the impact of higher-order Betti numbers on solving problems. This is due to the previous inability to compute such features before the introduction of quantum algorithms, making the question of their relevance moot. To gain a better understanding of which use cases are suitable for quantum computation, we must analyze various scenarios. To this purpose, one promising area is time series processing, which can be efficiently encoded as a point cloud in a $d$-dimensional space using Takens' embedding. Tuning the parameter $d$ to generate simplicial complexes with the highest order of Betti numbers could potentially enable us to reach a regime in which the quantum algorithm performs optimally. Time series has several applications in biology, engineering, and finance, including fraud detection \cite{dipierro2021quantum}. 

Secondly, an area of further exploration could be the potential combination of the proposed topological kernel with conventional (Gaussian, Laplacian, polynomial) kernels. This approach has the potential to leverage the flexibility of kernel methods and utilize the property that the sum, product, and limits of a sequence of Mercer kernels are also Mercer kernels. Consequently, it may be possible to create a hybrid kernel that incorporates both geometrical and topological information.

Thirdly, the skeleton of the Vietoris-Rips complex is constructed using a distance function, typically Euclidean. However, it is possible to substitute this distance function with one induced by a quantum kernel. Doing so may enable us to uncover relationships between data that are not apparent using traditional methods. An example of the potential advantages of these quantum kernels has been demonstrated in the context of physics-related problems  \cite{wozniak2023quantum}.

Fourthly, all the tested methods have the potential to be enhanced by optimizing the kernel parameters. This involves adjusting the parameter $\gamma > 0$ for the Gaussian and Laplacian kernels, the degree $d$ for the polynomial kernel, while for the topological kernel, both the $\gamma > 0$ parameter and the maximum order of Betti number $k$ might be optimized. It is also possible to enhance the approach by restricting the multivariate Betti curves, represented in matrix form in Equation (\ref{eq:bcmatrix}), to a randomly sampled subset of the thresholds $\epsilon_0, ..., \epsilon_q$.

Fifthly, the stability of the multivariate Betti curve can be intuitively inferred from the proof of the stability of the univariate Betti curve presented in \cite{rieck2020topological}, but this has to be formally proven.  

\section*{Code and data availability}

The code and data used in this study are available upon request by contacting the authors for access. 

\section*{Acknowledgment}

We acknowledge the CINECA award under the ISCRA initiative, for the availability of high-performance computing resources and support. MI thanks Oriel Kiss for the insightful discussion.


\end{document}